\newtheorem{definition}{Definition}[section]
\newtheorem{assumption}[definition]{Assumption}
\newtheorem{lemma}[definition]{Lemma}
\newtheorem{proposition}[definition]{Proposition}
\newtheorem{theorem}[definition]{Theorem}
\newtheorem{example}[definition]{Example}
\newtheorem{problem}[definition]{Problem}
\newcommand{\ubar}[1]{\underaccent{\bar}{#1}}
\DeclareMathAlphabet{\pazocal}{OMS}{zplm}{m}{n}
\renewcommand{\mathcal}[1]{\pazocal{#1}}
\newcommand{\tdsp}{\hspace{.75pt}}
\newcommand{\afterequation}{\vskip 3pt}
\DeclareMathOperator*{\minimize}{minimize}
\DeclareMathOperator*{\find}{find}
\DeclareMathOperator*{\subjectto}{subject\ to}
\newenvironment{proofof}[1]{
\begin{proof}}{\end{proof}
}
\title{\LARGE \bf
Resource Allocation for Containing Epidemics\\from Temporal Network Data
}
\author{Masaki Ogura and Junichi Harada
\thanks{The authors are with the Graduate School of Information Science, Nara Institute of Science and Technology, Ikoma, Nara, Japan. 
        email: {\tt\small \{oguram, harada.junichi.hh3\}@is.naist.jp}}%
}
\begin{document}

\maketitle
\thispagestyle{empty}
\pagestyle{empty}

\begin{abstract}
We study the problem of containing epidemic spreading processes in temporal networks. We specifically focus on the problem of finding a resource allocation to suppress epidemic infection, provided that an empirical time-series data of connectivities between nodes is available. Although this problem is of practical relevance, it has not been clear how an empirical time-series data can inform our strategy of resource allocations, due to the computational complexity of the problem. In this direction, we present a computationally efficient framework for finding a resource allocation that satisfies a given budget constraint and achieves a given control performance. The framework is based on convex programming and, moreover, allows the performance measure to be described by a wide class of functionals called posynomials with nonnegative exponents. We illustrate our theoretical results using a data of temporal interaction networks within a primary school.
\end{abstract}

\section{INTRODUCTION}

The containment of epidemic spreading processes taking place on complex networks is a major research area in the network science~\cite{Newman2006}. Relevant applications include information spread in on-line social networks, the evolution of epidemic outbreaks in human contact networks, and the dynamics of cascading failures in the electrical grid. Important advances in the analysis and containment of spreading processes over \emph{static} networks have been made during the last decade~\cite{Nowzari2015a,Pastor-Satorras2015a}. For example, Cohen et al.~\cite{Cohen2003} proposed a heuristic vaccination strategy called an acquaintance immunization policy and showed proved it to drastically improve the random vaccine distribution. The problem of determining the optimal allocation of control resources over static networks to efficiently eradicate epidemic outbreaks has been studied in~\cite{Preciado2014}. An efficient curing policy based on graph cuts has been proposed in~\cite{Drakopoulos2014}. Decentralized algorithms for epidemic control have been proposed in~\cite{Trajanovski2015a}. Other approaches based on the control theory can be found in, e.g.,~\cite{Wan2008IET,Khouzani2012a}. Recently,  cost-efficiency of various heuristic vaccination strategies were thoroughly investigated in~\cite{Holme2017}.

On the other hand, most epidemic processes of practical interest take place on \emph{temporal networks}~\cite{Masuda2013} having time-varying topologies~\cite{Holme2015b}. Although major advances have been made for the \emph{analysis} of epidemic spreading processes over temporal networks (see, e.g.,~\cite[Section~VIII]{Pastor-Satorras2015a} and~\cite[Section 6.4]{Masuda2016b}), there is still scarce of methodologies for \emph{containing} epidemic outbreaks on temporal networks. In this direction, Lee et al.~\cite{Lee2012} have presented heuristic vaccination strategies that exploit temporal correlations. Liu et al.~\cite{Liu2014a} have proposed an immunization strategy for a class of temporal networks called the activity-driven networks~\cite{Perra2012}. Optimization frameworks for distributing containment resources have been proposed for Markovian~\cite{Ogura2015a,Nowzari2015b} and adaptive~\cite{Ogura2015i} temporal networks. However, it is still left as an open problem how to effectively fit an empirical dataset of temporal networks to these stochastic models of temporal networks. Furthermore, the aforementioned results focus on the asymptotic evolution of epidemic infections and, therefore, do not allow us to control the evolution of epidemic spreading in a finite time window.

In this paper, we present an optimization framework for allocating control resources for eradicating epidemic infections in empirical temporal networks. We specifically show that, given a time-series data of temporal network, a budget constraint, and a requirement on control performance, we can find a resource allocation satisfying both the constraint and the requirement by solving a convex feasibility problem. Unlike in the aforementioned results, we allow the performance measure to depend on the transient evolution of epidemic processes. In order to realize this flexibility, we extend the class of functions called posynomials (see, e.g.,~\cite{Boyd2007}) to function spaces. We numerically illustrate the obtained theoretical results using the empirical temporal network between the children and teachers in a primary school~\cite{Stehle2011}.

This paper is organized as follows. In Section~\ref{sec:problemSetting}, we introduce the model of epidemic infection over temporal networks and state the resource allocation problem studied in this paper. In Section~\ref{sec:mainResult}, we state our main result that reduces the resource allocation problem to a convex feasibility problem. The proof of the reduction is presented in Section~\ref{sec:proof}. We finally illustrate the effectiveness of our results via numerical simulations in Section~\ref{sec:simulations}.

\subsection{Mathematical Preliminaries}

Let $\mathbb R$ and $\mathbb{R}_{+}$ denote the set of real and positive numbers, respectively. A real matrix $A$, or a vector as its special case, is said to be nonnegative, denoted by~$A\geq 0$, if $A$ is nonnegative entry-wise. For another matrix $B$, we write $A\leq B$ if $B-A \geq 0$.

An undirected graph is a pair~$G = (V, E)$, where $V = \{v_1, \ldots, v_n\}$ is the set of nodes, and $E$ is the set of edges consisting of distinct and unordered pairs~$\{v_i, v_j\}$ for $v_i, v_j\in V$. We say that $v_j$ is a neighbor of~$v_i$ (or, $v_i$ and $v_j$ are adjacent) if $\{v_i, v_j\} \in E$. The adjacency matrix~$A\in \mathbb{R}^{n\times n}$ of~$G$ is defined as the $\{0, 1\}$\nobreakdash-matrix whose $(i,j)$ entry is equal to one if and only if $v_i$ and~$v_j$ are adjacent.

For a subset $X$ of $\mathbb{R}^{n\times m}$ and $T>0$, we let $L^\infty([0, T], X)$ denote the space of $X$-valued, Lebesgue-measurable, and essentially bounded functions on $[0, T]$.

\section{PROBLEM SETTING}\label{sec:problemSetting}

In this section, we introduce our model of disease spread over temporal networks. We then formulate the resource distribution problem studied in the paper. The computational difficulty of the problem is also discussed.

\subsection{SIS Model over Temporal Networks}

We start by reviewing a model of spreading processes over static networks called the susceptible--infected--susceptible (SIS) model~\cite{Pastor-Satorras2015a}. Let $G = (V, E)$ be an undirected graph, where nodes in~$V = \{v_1, \dots,  v_n\}$ represent individuals and edges in $E$ represent interactions between them. At a given time~$t \geq 0$, each node can be in one of two possible states: \emph{susceptible} or \emph{infected}. In the SIS model, when a node~$v_i$ is infected, it can randomly transition to the susceptible state with an instantaneous rate~$\delta_i > 0$, called the \emph{recovery rate} of node~$v_i$. On the other hand, if a neighbor of node~$v_i$ is in the infected state, then the neighbor can infect node~$v_i$ with the instantaneous rate~$\beta_i$, where $\beta_i > 0$ is called the \emph{transmission rate} of node $v_i$.
Therefore, if we define the variable
\begin{equation}\label{eq:def:x_i}
x_i(t) = \begin{cases}
0, & \mbox{if $v_i$ is susceptible at time $t$, }
\\
1, & \mbox{if $v_i$ is infected at time $t$, }
\end{cases}
\end{equation}
then the transition probabilities of the SIS model in the time window $[t, t+h]$ can be written as
\begin{align}
\Pr(x_i(t+h) = 0 \mid x_i(t) = 1) 
&= 
\delta_i h + o(h), \label{eq:recovery}
\\
\Pr(x_i(t+h) = 1 \mid x_i(t) = 0) 
&=
\beta_i \sum_{{j \in N_i}} x_j(t) h + o(h),\notag 
\end{align}
where $N_i$ denotes the set of neighbors of~$v_i$ and $o(h)/h\to 0$  as $h \to 0$.

The SIS model over static networks can be naturally extended to the case of temporal networks (i.e., time-varying networks)~\cite{Ogura2015c,Masuda2016b}. In this paper, we adopt the following definition of temporal networks:

\begin{definition}\label{defn:temporal}
Let $T>0$ and a set of nodes~$V$ be given. A piecewise-constant and right-continuous function defined on $[0, T]$ and taking values in the set of undirected networks having nodes $V$ is called a \emph{temporal network}.
\end{definition}

As in the case of static networks, at each time $t\geq 0$, each node can be either susceptible or infected in the SIS model over temporal networks.  For all $i \in \{1, \ldots, n\}$ and $t \in [0, T]$, let us define the variable~$x_i(t)$ by~\eqref{eq:def:x_i}. Then, we define the transition probabilities of the SIS model over the temporal network by~\eqref{eq:recovery} and
\begin{equation*}
\Pr(x_i(t+h) = 1 \mid x_i(t) = 0) 
=
\beta_i \sum_{{j \in N_i(t)}} x_j(t) h + o(h), 
\end{equation*}
where $N_i(t)$ denotes the set of neighbors of node~$v_i$ at time~$t$.

\subsection{Problem Formulation}

Let us consider the following epidemiological problem~\cite{Preciado2014}: Suppose that we can use vaccines for reducing the transmission rates of individuals in the network, and antidotes for increasing their recovery rates. Assuming that vaccines and antidotes have an associated cost and that we are given a fixed budget, how should we distribute vaccines and antidotes throughout the individuals in the temporal network to suppress epidemic infections? 

In order to rigorously state this problem, define the infection probability 
\begin{equation*}
p_i(t) = P(\mbox{$v_i$ is infected at time $t$})
\end{equation*}
and the vector 
\begin{equation*}
p = 
{\begin{bmatrix}
p_1 & \cdots & p_n
\end{bmatrix}}^\top. 
\end{equation*}
Suppose that we are given a functional $J \colon L^\infty([0, T], \mathbb{R}_{+}^n) \to \mathbb{R}_{+}$ to measure the persistence of epidemic infection. For achieving a small value of~$J(p)$, we assume~\cite{Preciado2014} that the transmission and recovery rates can be tuned within the following intervals:
\begin{equation}\label{eq:boxConstraints}
0< \ubar \beta_i \leq \beta_i \leq \bar \beta_i, \ 
0< \ubar \delta_i \leq
\delta_i \leq \bar \delta_i. 
\end{equation}
Furthermore, suppose that we have to pay $\phi_i(\beta_i)$ unit of cost to tune the transmission rate of node~$v_i$ to $\beta_i$. Likewise, we assume that the cost for tuning the recovery rate of node~$v_i$ to~$\delta_i$ equals $\psi_i(\delta_i)$. Notice that the total cost of realizing the collection of transmission rates~$\beta = (\beta_1, \ldots, \beta_n)$ and recovery rates $\delta = (\delta_1, \ldots, \delta_n)$ in the network  is given by
\begin{equation*}
R(\beta, \delta) = \sum_{i=1}^n \left(\phi_i(\beta_i) + \psi_i(\delta_i)\right).
\end{equation*}

We can now state our resource allocation problem. 

\begin{problem}\label{prb:original}
Given a temporal network~$\mathcal G$, an initial condition~$p(0)\in \mathbb{R}_{+}^n$, and positive constants~$\bar J$ and~$\bar R$, find the transmission and recovery rates~$\beta$ and $\delta$ satisfying the feasibility constraints~\eqref{eq:boxConstraints}, the performance constraint
\begin{equation}\label{eq:performanceConst}
J(p) \leq \bar J, 
\end{equation}
and the budget constraint
\begin{equation}\label{eq:budgetConstraint}
R(\beta, \delta) \leq \bar R. 
\end{equation}
\afterequation
\end{problem}

As is well known~\cite[Section~IV]{Pastor-Satorras2015a}, it is not practically feasible to even evaluate the infection probabilities $p$ for large-scale networks. To briefly illustrate the difficulty, let us focus on the SIS model over a static network. Observe that the collection of variables $(x_1, \ldots, x_n)$ is a Markov process having the total of~$2^n$ possible states (two states per node). Let us label the $2^n$ states as $s_1$, \dots, $s_{2^n}$, and let $q_\ell(t)$ denote the probability that the process is in the state~$s_\ell$ at time~$t$. Then, the infection probability~$p_i(t)$ is equal to a linear combination of the probabilities~$q_1(t)$, \dots,~$q_{2^n}(t)$. However, the computation of all the probabilities $q_1(t)$, \dots,~$q_{2^n}(t)$ is demanding for large-scale networks. Since the computational difficulty is inherited in the case of temporal networks, it is not realistic to directly solve Problem~\ref{prb:original} for large-scale temporal networks.

\section{MAIN RESULTS}\label{sec:mainResult}

In this section, we present a solution to Problem~\ref{prb:original} in terms of a convex feasibility problem. In Subsection~\ref{subsec:posy}, we introduce a novel class of functionals called posynomials with nonnegative exponents and extend them to functionals on function spaces. Under the assumption that the objective function belongs to this class, in Subsection~\ref{subsec:cert} we show that the solution of Problem~\ref{prb:original} can be given by solving a convex feasibility problem. We finally discuss some optimal resource allocation problems in Subsection~\ref{subsec:optimal}.

\subsection{Posynomials with Nonnegative Exponents}\label{subsec:posy}

We start by reviewing the notion of posynomials and monomials~\cite{Boyd2007}. Let $F \colon \mathbb{R}^n_{+}\to \mathbb{R}_{+}$ be a function. We say that $F$ is a \emph{monomial} if there exist $c>0$ and real numbers~$a_1$, \dots,~$a_n$ such that
\begin{equation}\label{eq:f:monomial}
F(v) = c v_{\mathstrut 1}^{a_{1}} \dotsm v_{\mathstrut n}^{a_n}. 
\end{equation}
We say that $F$ is a \emph{posynomial} if $F$ is a sum of finitely many monomials. We say that $F$ is a \emph{generalized posynomial} if $F$ can be formed from posynomials using the operations of addition, multiplication, positive (fractional) power, and maximum. The following lemma shows the log-log convexity of posynomials and is used for the proof of our main results:

\begin{lemma}[\cite{Boyd2007}]\label{lem:llconvexity}
Let $F \colon \mathbb{R}_{+}^n \to \mathbb{R}_{+}$ be a generalized posynomial. Define the function $f\colon \mathbb{R}^n \to \mathbb{R}$ by
$f(w) = \log F(\exp[w])$, 
where $\exp[\cdot]$ denotes the entry-wise exponentiation of vectors. Then, $f$ is convex. 
\end{lemma}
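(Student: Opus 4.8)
The plan is to establish convexity of $f$ by working up through the three levels of the definition---monomials, posynomials, then generalized posynomials---and showing that each construction preserves convexity under the log-log transformation. First I would handle the base case of a monomial $F(v) = c\, v_1^{a_1}\dotsm v_n^{a_n}$. Substituting $v = \exp[w]$ entrywise gives $F(\exp[w]) = c \exp(a_1 w_1 + \dots + a_n w_n)$, so that $f(w) = \log c + \sum_{k=1}^n a_k w_k$ is an affine function of $w$. Affine functions are convex, settling the monomial case.

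Next I would treat a general posynomial $F = \sum_{\ell} M_\ell$, where each $M_\ell$ is a monomial. From the first step, each $f_\ell(w) = \log M_\ell(\exp[w])$ is affine, so $M_\ell(\exp[w]) = \exp(f_\ell(w))$ and hence $f(w) = \log\sum_\ell \exp(f_\ell(w))$. This is the composition of the log-sum-exp function $\mathrm{lse}(y_1,\dots) = \log\sum_\ell \exp(y_\ell)$, which is a standard convex function, with the affine map $w \mapsto (f_1(w), \dots)$. Since log-sum-exp is convex and nondecreasing in each argument, and the inner maps are affine, the composition is convex.

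Finally I would extend to generalized posynomials by induction on the construction, using the closure properties of convex functions. If $F$ and $G$ are generalized posynomials with $\log F(\exp[\cdot])$ and $\log G(\exp[\cdot])$ already convex, I would verify the four permitted operations. For \emph{maximum}, $\log\max(F,G)(\exp[w]) = \max(\log F(\exp[w]), \log G(\exp[w]))$ is a pointwise maximum of convex functions, hence convex. For a \emph{positive power} $F^\gamma$ with $\gamma > 0$, one has $\log F^\gamma(\exp[w]) = \gamma\, \log F(\exp[w])$, which is convex since $\gamma > 0$. The cases of \emph{addition} and \emph{multiplication} are the more delicate ones: writing $g = \log F(\exp[\cdot])$ and $h = \log G(\exp[\cdot])$, multiplication gives $\log(FG)(\exp[w]) = g(w) + h(w)$, a sum of convex functions and thus convex; addition gives $\log(F+G)(\exp[w]) = \log(\exp(g(w)) + \exp(h(w)))$, which is again a log-sum-exp composed with the convex map $w \mapsto (g(w), h(w))$, and one must invoke the fact that log-sum-exp is convex and coordinatewise nondecreasing to conclude convexity of the composition.

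The main obstacle I anticipate is the addition step in the generalized-posynomial induction. Unlike ordinary posynomials, where the summands are monomials and hence the inner functions are \emph{affine}, here the inner functions $g$ and $h$ are merely convex. Asserting that $\log(\exp(g) + \exp(h))$ remains convex requires the composition rule for convex functions with the convexity \emph{and monotonicity} of log-sum-exp; this is where one must be careful to state and apply the vector composition theorem correctly, rather than relying on the simpler affine-precomposition argument that suffices for monomials and plain posynomials.
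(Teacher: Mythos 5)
Your proof is correct. Note that the paper itself does not prove this lemma at all---it is quoted with a citation to the geometric-programming tutorial of Boyd et al., so there is no in-paper argument to compare against; your write-up is essentially the standard proof from that reference: affine functions for monomials, log-sum-exp composed with affine maps for posynomials, and induction over the four closure operations for generalized posynomials. You also correctly isolate the one genuinely delicate step---addition of generalized posynomials---where the inner functions are merely convex rather than affine, so one must invoke the vector composition rule (log-sum-exp is convex \emph{and} coordinatewise nondecreasing, composed with convex arguments) rather than plain affine precomposition; since log-sum-exp is finite and nondecreasing on all of $\mathbb{R}^k$, that rule applies without any extended-value caveats, and the argument is complete.
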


In this paper, the following class of monomials and posynomials plays an important role. 

\begin{definition}
Let $F \colon \mathbb{R}^n_{+}\to \mathbb{R}_{+}$ be a function. 
\begin{itemize}
\item We say that $F$ is a \emph{monomial with nonnegative exponents} if there exist $c>0$ and nonnegative numbers~$a_1$, \dots,~$a_n$ such that~\eqref{eq:f:monomial} holds true. 

\item We say that $F$ is a \emph{posynomial with nonnegative exponents} if $F$ is the sum of finitely many monomials with nonnegative exponents. 

\item We say that $F$ is a \emph{generalized posynomial with nonnegative exponents} if $F$ can be formed from posynomials with nonnegative exponents using the operations of addition, multiplication, positive (fractional) power, and maximum. 
\end{itemize}
\end{definition}

We further extend this definition to functionals on function spaces: 

\begin{definition}
Let $F \colon L^\infty([0, T], \mathbb{R}^n_{+}) \to \mathbb{R}_{+}$ be a functional. 
\begin{itemize}
\item We say that $F$ is a \emph{finite-monomial with nonnegative exponents} if there exist $t_1, \ldots, t_m \in [0, T]$, $i_1, \ldots, i_m \in \{1, \ldots, n\}$, and a monomial $g\colon \mathbb{R}^m_{+} \to \mathbb{R}_{+}$ with nonnegative exponents such that $F(v)  = g(v_{i_1}(t_1), \dotsc, v_{i_m}(t_m))$.

\item We say that $F$ is a \emph{finite-posynomial with nonnegative exponents} if $F$ is a finite sum of finite-monomials with nonnegative exponents. 

\item We say that $F$ is a \emph{generalized finite-posynomial with nonnegative exponents} if $F$ can be formed from finite-posynomials with nonnegative exponents using the operations of addition, multiplication, positive (fractional) power, and maximum.

\item We say that $F$ is a \emph{generalized posynomial with nonnegative exponents} if $F$ is a pointwise limit of a sequence of generalized finite-posynomials with nonnegative exponents. 
\end{itemize}
\afterequation
\end{definition}

We now state our assumption on the performance measure~$J$. 

\begin{assumption}\label{asm:}
$J$ is a generalized posynomial with nonnegative exponents. 
\end{assumption}

This assumption allows us to describe several performance measures of interest, as illustrated below.

\begin{example}
Let $w_1$, \dots,~$w_n$ be positive numbers. Let $t \in [0, T]$ and $q > 0$ be arbitrary. Then, the weighted $\ell_q$-norm
\begin{equation*}
J(p) 
=
{\biggl(\sum_{i=1}^n {(w_i p_i(t))}^q\biggr)}^{1/q}
\end{equation*}
is a generalized finite-posynomial with nonnegative exponents and, therefore, satisfies Assumption~\ref{asm:}. The weights~$w_1, \ldots, w_n$ adjust the protection level of the nodes (i.e., the larger $w_i$, the stronger node $v_i$ will be protected). We can also tune the shape of the cost functional by changing the value of the exponent~$q$.
\end{example}

\begin{example}
Let $w\in L^\infty([0, T], \mathbb{R}^n_{+})$ and define 
\begin{equation*}
J(p) = \int_0^T w(t)^\top p(t) \,dt. 
\end{equation*}
Let us confirm that $J$ satisfies Assumption~\ref{asm:}. For each $k\geq 1$, let $h = T/k$ and define $J_k(p) =  \sum_{\ell=0}^{N} h w(\ell h)^\top p(\ell h)$. Then, $J_k$ is a finite-posynomial with nonnegative exponents for every $k$. Moreover, the measurability of~$w$ and~$p$ shows $J(p) = \lim_{k\to\infty} J_k(p)$. Therefore, $J$ is a generalized posynomial with nonnegative exponents. 
\end{example}

\subsection{Convex Feasibility Certificate}\label{subsec:cert}

This subsection presents the main result of this paper. We place on the cost functions the following assumptions~\cite{Preciado2014,Ogura2015i}. 

\begin{assumption}\label{asm:cost}
For all $i\in \{1, \ldots, n\}$, define the functions $\phi_i^+ = \max(\phi_i, 0)$,  $\phi_i^- = \max(-\phi_i, 0)$, $\psi_i^+ = \max(\psi_i, 0)$, and $\psi_i^- = \max(-\psi_i, 0)$. The following conditions hold true: 
\begin{itemize}
\item $\phi_i^+$ is a posynomial for all $i$; 
\item 
There exists $\hat \delta > \max(\bar \delta_1, \ldots, \bar \delta_n)$ such that the function
\begin{equation*}
\tilde \psi_i^+ \colon \mathbb{R}_{+} \to \mathbb{R}_{+} \colon 
\tilde \delta_i \mapsto \psi_i^+(\hat \delta - \tilde \delta_i)
\end{equation*}
is a posynomial for all $i$; 
\item $\phi_i^-$ and $\psi_i^-$ are nonnegative constants for all $i$. 
\end{itemize}
\afterequation
\end{assumption}

In order to state the main result, Let $\bar p$ denote the solution of the differential equation:
\begin{equation}\label{eq:barSigma}
\frac{d\bar p}{dt} = (BA(t) - D) \bar p, \quad \bar p(0) = p(0), 
\end{equation}
where $A(t)$ denotes the adjacency matrix of the network~$\mathcal G(t)$ for each $t \in [0, T]$, and the matrices $B$ and $D$ are the diagonal matrices having $\beta_1, \dotsc, \beta_n$ ($\delta_1, \dotsc, \delta_n$, respectively) as their diagonals. Let us denote by~$\bar p(\cdot; \beta, \delta) \in L^\infty([0, T], \mathbb{R}_{+}^n)$ the solution of the differential equation~\eqref{eq:barSigma} for transmission rates~$\beta =  (\beta_1, \ldots, \beta_n)$ and recovery rates~$\delta = (\delta_1, \ldots, \delta_n)$, and define
\begin{equation}\label{eq:def:F}
F(\beta, \delta) = J(\bar p(\cdot; \beta, \delta)). 
\end{equation}
Define $f\colon \mathbb{R}^{2n} \to \mathbb{R}$ by
\begin{equation}
f(b, \tilde d\tdsp ) = \log F(\exp[b], \hat \delta - \exp [\tilde d\tdsp ]).  \label{eq:def:f}
\end{equation}
Also, let 
\begin{equation*}
\begin{aligned}
R^+(\beta, \delta) = \sum_{i=1}^n \left(\phi_i^+(\beta_i) + \psi_i^+(\delta_i)\right), \     
R^- = \sum_{i=1}^n (\phi_i^- + \psi_i^-), 
\end{aligned}
\end{equation*}
and define $r^+\colon \mathbb{R}^{2n} \to \mathbb{R}$ by
\begin{equation*}
r^+(b, \tilde d\tdsp ) = \log R^+(\exp[b], \hat \delta - \exp [\tilde d\tdsp ]). 
\end{equation*}
The following theorem allows us to efficiently solve Problem~\ref{prb:original} and is the main result of this paper. We give the proof of the theorem in Section~\ref{sec:proof}.

\begin{theorem}\label{thm:main}
Solutions of Problem~\ref{prb:original} are given by 
\begin{equation}\label{eq:solutionExpressions}
\beta_i = \exp(b_i), \quad \delta_i = \hat \delta - \exp(\tilde d_i), 
\end{equation}
where $b \in \mathbb R^n$ and $\tilde d \in \mathbb R^n$ solve the following convex feasibility problem:
\begin{subequations}\label{eq:convexConstraints}
\begin{align}
\find\hspace{1.2cm}& b, \tilde d 
\\
\subjectto\quad&f(b, \tilde d\tdsp ) \leq \log \bar J, \label{eq:convexConstraints1}
\\
&r^+(b, \tilde d )\leq \log (\bar R + R^-), \label{eq:convexConstraints2}
\\
&\log \ubar{\beta}_i \leq b_i \leq \log \bar \beta_i, \label{eq:convexConstraints3}
\\
&\log (\hat \delta - \bar \delta_i) \leq \tilde d_i \leq \log (\hat \delta - \ubar \delta_i). \label{eq:convexConstraints4}
\end{align}
\end{subequations}
\afterequation
\end{theorem}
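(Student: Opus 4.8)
\section*{Proof proposal}

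The plan is to prove the theorem by \emph{sufficiency}: I will show that any $(b,\tilde d\tdsp)$ feasible for the convex problem~\eqref{eq:convexConstraints} yields, through the change of variables~\eqref{eq:solutionExpressions}, a pair $(\beta,\delta)$ that is feasible for Problem~\ref{prb:original}. The argument rests on three ingredients: an entry-wise upper bound $p(t)\le \bar p(t)$ relating the true infection probabilities to the solution of the linear system~\eqref{eq:barSigma}; the monotonicity of the objective functional $J$, which comes for free from its nonnegative exponents; and a log-log change of variables under which each constraint becomes convex by Lemma~\ref{lem:llconvexity}.

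First I would establish the upper bound. Writing out $dp_i/dt$ from the transition probabilities and bounding $\mathbb{E}[(1-x_i)x_j]\le \mathbb{E}[x_j]=p_j$, one obtains the differential inequality $dp/dt \le (BA(t)-D)p$ with $p(0)=\bar p(0)$. Because $B,A(t)\ge 0$, the matrix $BA(t)-D$ is Metzler, so its state-transition matrix is entry-wise nonnegative; the standard comparison principle for positive linear systems then gives $p(t)\le \bar p(t)$ for all $t\in[0,T]$. Since $J$ is a generalized posynomial with nonnegative exponents (Assumption~\ref{asm:}), it is nondecreasing in each coordinate, whence $J(p)\le J(\bar p)=F(\beta,\delta)$. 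Thus the performance constraint~\eqref{eq:convexConstraints1}, namely $F(\beta,\delta)\le \bar J$, implies the original constraint~\eqref{eq:performanceConst}.

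The heart of the proof, and the step I expect to be the main obstacle, is to show that $f$ in~\eqref{eq:def:f} is convex. Setting $\tilde\delta=\hat\delta-\delta$ and $\tilde F(\beta,\tilde\delta)=F(\beta,\hat\delta-\tilde\delta)$, one has $f(b,\tilde d\tdsp)=\log\tilde F(\exp[b],\exp[\tilde d\tdsp])$, so by Lemma~\ref{lem:llconvexity} it suffices to prove that $\tilde F$ is a generalized posynomial with nonnegative exponents in $(\beta,\tilde\delta)$. I would argue this in two layers. First, each entry of $\bar p(\cdot;\beta,\delta)$ is such a functional: using the piecewise-constant structure from Definition~\ref{defn:temporal}, $\bar p(t)$ is a finite product of matrix exponentials $e^{M_k\tau_k}$ applied to $p(0)\ge 0$, where $M_k=BA_k-D=-\hat\delta I+(BA_k+\tilde D)$ and $\tilde D=\mathrm{diag}(\tilde\delta)$. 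Factoring out the positive constant $e^{-\hat\delta\tau}$ and expanding $e^{(BA_k+\tilde D)\tau}=\sum_{m\ge 0}(BA_k+\tilde D)^m\tau^m/m!$, every entry is a pointwise limit of sums of products of the entries $\beta_i(A_k)_{ij}$ and $\tilde\delta_i$, each of which is a monomial with nonnegative exponents; hence every entry of $\bar p(t)$ is a generalized posynomial with nonnegative exponents in $(\beta,\tilde\delta)$. Second, since $J$ itself is a generalized posynomial with nonnegative exponents, composing it with these entries preserves the class: the crucial point is that the defining operations include only \emph{positive} powers and that the exponents of $J$ are nonnegative, so no reciprocal of a posynomial ever appears. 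This yields $\tilde F$ as required, and hence the convexity of $f$ and of constraint~\eqref{eq:convexConstraints1}.

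The remaining constraints are routine. For the budget, since $\phi_i^-,\psi_i^-$ are constants (Assumption~\ref{asm:cost}) we have $R=R^+-R^-$, so~\eqref{eq:budgetConstraint} is equivalent to $R^+(\beta,\delta)\le \bar R+R^-$; Assumption~\ref{asm:cost} makes $\phi_i^+$ and $\tilde\psi_i^+$ posynomials, so $R^+$ is a posynomial in $(\beta,\tilde\delta)$ and Lemma~\ref{lem:llconvexity} gives convexity of $r^+$, i.e.\ of~\eqref{eq:convexConstraints2}. The box constraints~\eqref{eq:boxConstraints} become the affine constraints~\eqref{eq:convexConstraints3}--\eqref{eq:convexConstraints4} after taking logarithms, using $\hat\delta>\bar\delta_i$ to guarantee that the arguments of the logarithms are positive. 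Assembling these, any $(b,\tilde d\tdsp)$ solving~\eqref{eq:convexConstraints} produces through~\eqref{eq:solutionExpressions} a pair satisfying~\eqref{eq:boxConstraints}, \eqref{eq:performanceConst}, and~\eqref{eq:budgetConstraint}, which completes the proof.
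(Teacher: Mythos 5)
Your proposal is correct and follows essentially the same route as the paper's proof: the Metzler-matrix comparison bound $p(t)\leq\bar p(t)$, monotonicity of $J$ from its nonnegative exponents, the factorization $e^{-\hat\delta t}$ plus Taylor expansion of $\exp((BA_\ell+\tilde D)h_\ell)$ showing each entry of $\bar p$ is a pointwise limit of posynomials in $(\beta,\tilde\delta)$, composition with $J$, and log-log convexity via Lemma~\ref{lem:llconvexity}. The one point to tighten is that $\tilde F$ is not literally a generalized posynomial but a pointwise limit of them (matrix exponentials are infinite series, and $J$ itself may be a limit of finite-posynomials), so Lemma~\ref{lem:llconvexity} must be applied to the finite approximants and combined with the fact that pointwise limits of convex functions are convex---exactly what the paper does with its sequences $\zeta_k^{(\ell)}$ and $f_k$.
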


\subsection{Optimal Resource Allocation Problems}\label{subsec:optimal}

In this subsection, we formulate some optimal resource allocation problems and discuss how the problems can be sub-optimally solved using Theorem~\ref{thm:main}. We first consider the following performance-constrained allocation problem:

\begin{problem}\label{prb:pc}
Given a temporal network~$\mathcal G$, an initial condition $p(0)\in \mathbb{R}_{+}^n$, and a positive constant~$\bar J$, find the transmission and recovery rates~$\beta$ and $\delta$ satisfying the feasibility constraints~\eqref{eq:boxConstraints} and the performance constraint~\eqref{eq:performanceConst}, while minimizing the cost $R(\beta, \delta)$.
\end{problem}

Using Theorem~\ref{thm:main}, we can find sub-optimal solutions~\eqref{eq:solutionExpressions} to Problem~\ref{prb:pc} by solving the convex optimization problem: 
\begin{equation*}
\begin{aligned}
\minimize_{b,\,\tilde d}\ \  \,\,&  r^+(b, \tilde d\tdsp )
\\
\subjectto\ \  &\mbox{\eqref{eq:convexConstraints1}, \eqref{eq:convexConstraints3}, and~\eqref{eq:convexConstraints4}}. 
\end{aligned}
\end{equation*}

We also consider the budget-constrained allocation problem formulated as follows:

\begin{problem}\label{prb:bc}
Given a temporal network~$\mathcal G$, an initial condition~$p(0)\in \mathbb{R}_{+}^n$, and a positive constant~$\bar R$, find the transmission and recovery rates~$\beta$ and $\delta$ satisfying the feasibility constraints~\eqref{eq:boxConstraints} and the budget constraint~\eqref{eq:budgetConstraint}, while minimizing $J(p)$.
\end{problem}

In the same way as in the case of the budget-constrained allocation problem considered above, we can formulate the following convex optimization problem for finding sub-optimal solutions to Problem~\ref{prb:bc}:
\begin{equation}\label{eq:opt:bc}
\begin{aligned}
\minimize_{b,\,\tilde d}\ \ \,\, &  f(b, \tilde d\tdsp )
\\
\subjectto\ \  
&\mbox{\eqref{eq:convexConstraints2}, \eqref{eq:convexConstraints3}, and~\eqref{eq:convexConstraints4}}. 
\end{aligned}
\end{equation}
\afterequation

\section{PROOF} \label{sec:proof}

We give the proof of Theorem~\ref{thm:main} in this section. We start with the following lemma, which shows that the solution of the switched linear positive system~\eqref{eq:barSigma} upper-bounds the infection probabilities:

\begin{lemma}
For all $t\in[0, T]$, we have 
\begin{equation}\label{eq:pleqbarp}
p(t)\leq \bar p(t), 
\end{equation}
where $\bar p$ is the solution of the differential equation~\eqref{eq:barSigma}.
\end{lemma}

\begin{proof}
By Definition~\ref{defn:temporal}, there exist finitely many undirected graphs~$G_1$, \dots,~$G_L$ and real numbers $0 = t_0 < t_1< \cdots < t_L = T$ such that $\mathcal G(t) = G_\ell$ if $t_{\ell-1}\leq t<t_\ell$. Then, we can show (see, e.g.,~\cite{Ogura2015c}) that the differential equation ${dp}/{dt} \leq (BA_\ell - D) p$ holds true for $t_{\ell-1}\leq t<t_\ell$, where $A_\ell$ denotes the adjacency matrix of~$G_\ell$. Therefore, there exists an $\mathbb{R}^n$-valued function $\epsilon$ such that $\epsilon(t)\geq 0$ for all $t$ and 
\begin{equation*}
\frac{dp}{dt} = (BA_\ell - D) p - \epsilon,\quad t_{\ell-1}\leq t<t_\ell. 
\end{equation*}
Solving this differential equation for $0\leq t \leq t_1$ shows
\begin{equation*}
\begin{aligned}
p(t) 
&= e^{(BA_1 - D)(t)}p(0) - \int_0^t e^{(BA_1 - D)(t-\tau)} \epsilon(\tau)\,d\tau
\\
&\leq 
e^{(BA_1 - D)(t)}\bar p(0)
\\
&= \bar p(t), 
\end{aligned}
\end{equation*}
where we used the fact that $BA_1 - D$ is a Metzler matrix~\cite{Farina2000} and the initial condition~$\bar p(0) = p(0)$. Therefore, inequality~\eqref{eq:pleqbarp} holds true if $0\leq t\leq t_1$. Using an induction, we can extend the inequality for all $t \in [0, T]$.
\end{proof}

About the upper-bound $\bar p$ on the infection probabilities, we can prove the following proposition: 

\begin{proposition}\label{prop:prelim}
Let $t \in [0, T]$ and $i\in\{1, \dotsc, n\}$. For $\tilde \delta \in \mathbb{R}_{+}^n$, let us write $\hat \delta - \tilde \delta = (\hat \delta - \tilde \delta_1, \dotsc, \hat \delta - \tilde \delta_n) \in \mathbb{R}_+^n$. Then, the function
\begin{equation}\label{eq:element}
\mathbb{R}^{2n}_{+} \to \mathbb{R}_{+}\colon (\beta, \tilde \delta) \mapsto \bar p_i(t; \beta, \hat \delta -  \tilde \delta)
\end{equation}
is the pointwise limit of a sequence of posynomials over~$\mathbb{R}_{+}^{2n}$. 
\end{proposition}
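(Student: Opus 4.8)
The plan is to write $\bar p_i(t;\beta,\hat\delta-\tilde\delta)$ explicitly through the state-transition matrix of~\eqref{eq:barSigma} and then exhibit it as a limit of matrix products whose entries are posynomials. By Definition~\ref{defn:temporal} there are graphs $G_1,\dots,G_L$ with adjacency matrices $A_1,\dots,A_L$ and breakpoints $0=t_0<\cdots<t_L=T$ with $\mathcal G(t)=G_\ell$ on $[t_{\ell-1},t_\ell)$, so on each subinterval the coefficient matrix of~\eqref{eq:barSigma} is the constant $M_\ell = BA_\ell - D$. Fixing $t$ in, say, the $m$th subinterval and setting $s_\ell=t_\ell-t_{\ell-1}$ for $\ell<m$ and $s_m=t-t_{m-1}$, the solution is $\bar p(t)=\Phi(t)p(0)$ with $\Phi(t)=e^{M_m s_m}\cdots e^{M_1 s_1}$. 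Since $p(0)>0$, it then suffices to show that every entry of $\Phi(t)$ is a pointwise limit of posynomials in $(\beta,\tilde\delta)$, because a nonnegative linear combination of such limits is again such a limit.

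The key step is the decomposition that removes the unfavorable sign carried by the recovery rates. Writing $\delta=\hat\delta-\tilde\delta$ gives $-D=-\hat\delta I+\operatorname{diag}(\tilde\delta)$, hence $M_\ell=-\hat\delta I + N_\ell$ with $N_\ell := BA_\ell + \operatorname{diag}(\tilde\delta)$. Because each $A_\ell$ has zero diagonal, the matrix $N_\ell$ is entrywise nonnegative on $\mathbb{R}_+^{2n}$, and each of its entries is a monomial in $(\beta,\tilde\delta)$: the off-diagonal $(i,j)$ entry equals $(A_\ell)_{ij}\beta_i$ and the diagonal $(i,i)$ entry equals $\tilde\delta_i$. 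This is precisely the role of the substitution and of the constant $\hat\delta>\max_i\bar\delta_i$: the negative contribution $-\delta_i$, which could not appear in a posynomial, is converted into the constant shift $-\hat\delta I$ together with the nonnegative monomial $\tilde\delta_i$. Since $-\hat\delta I$ commutes with $N_\ell$, I obtain the clean factorization $e^{M_\ell s_\ell}=e^{-\hat\delta s_\ell}e^{N_\ell s_\ell}$, in which $e^{-\hat\delta s_\ell}$ is a positive constant.

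I would then approximate each factor by truncating the exponential series. For $J\geq 0$ put
\[
E_\ell^{(J)} = e^{-\hat\delta s_\ell}\sum_{j=0}^{J}\frac{s_\ell^{\,j}}{j!}\,N_\ell^{\,j}.
\]
Each entry of $N_\ell^{\,j}$ is a finite sum of products of the monomial entries of $N_\ell$ with nonnegative coefficients, hence a posynomial; multiplying by the positive constants $s_\ell^{\,j}/j!$ and $e^{-\hat\delta s_\ell}$ and summing preserves this, so every entry of $E_\ell^{(J)}$ is a posynomial. Because posynomials are closed under addition and multiplication, every entry of the finite product $\Phi^{(J)}(t):=E_m^{(J)}\cdots E_1^{(J)}$ is again a posynomial. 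For each fixed $(\beta,\tilde\delta)$ the exponential series converges entrywise, so $E_\ell^{(J)}\to e^{M_\ell s_\ell}$ as $J\to\infty$; continuity of multiplication of finitely many matrices in finite dimensions then gives $\Phi^{(J)}(t)\to\Phi(t)$ pointwise. Consequently $\bar p_i(t;\beta,\hat\delta-\tilde\delta)=[\Phi(t)p(0)]_i=\lim_{J\to\infty}[\Phi^{(J)}(t)p(0)]_i$ is a pointwise limit of posynomials; note that the term $e^{-\hat\delta t}p_i(0)>0$ is always present, so each approximant is a genuine, nonempty posynomial.

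The main obstacle I anticipate is bookkeeping around the limit rather than any deep difficulty: one must use a single truncation index $J$ across all $m$ factors and justify passing the pointwise limit through the finite matrix product, which is immediate from continuity of multiplication once entrywise convergence of each factor is established. The only genuinely substantive point is the sign handling in the second paragraph, and it is resolved entirely by the $\delta=\hat\delta-\tilde\delta$ reparametrization together with the choice of $\hat\delta$.
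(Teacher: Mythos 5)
Your proposal is correct and follows essentially the same route as the paper's own proof: decompose the solution as a product of matrix exponentials over the constant subintervals, use the substitution $\delta=\hat\delta-\tilde\delta$ to factor out the commuting term $-\hat\delta I$ and leave the entrywise-nonnegative matrix $BA_\ell+\operatorname{diag}(\tilde\delta)$ whose entries are monomials, then truncate each exponential series so every entry of the approximating product is a posynomial and pass to the pointwise limit. The only cosmetic difference is that the paper writes the scalar factor as a single $e^{-\hat\delta t}$ rather than per-factor constants $e^{-\hat\delta s_\ell}$, which is the same thing since the subinterval lengths sum to $t$.
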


\begin{proof}
Let $t\in [0, T]$ be arbitrary. Since the temporal network~$\mathcal G$ is  piecewise constant, there exist nonnegative numbers $h_1$, \dots,~$h_L$ such that $h_1 + \cdots + h_L = t$ and
\begin{equation*}
\begin{aligned}
\bar p(t; \beta, \delta) 
&= 
\exp((BA_L-D)h_L) \dotsm \exp((BA_1-D)h_1)  p(0)
\\
&=
\biggl( \prod_{\ell=1}^L \exp((BA_\ell-D)h_\ell) \biggr) p(0). 
\end{aligned}
\end{equation*}
Let $\tilde D$ be the diagonal matrix having the diagonals $\tilde \delta_1$, \dots,~$\tilde \delta_n$. Then, 
\begin{equation}\label{eq:barp(t)=...}
\begin{aligned}
\bar p(t; \beta, \hat \delta - \tilde \delta ) 
&= 
e^{-\hat \delta t}\biggl( \prod_{\ell=1}^L \exp((BA_\ell+\tilde D)h_\ell) \biggr) p(0)
\\
&=
\lim_{s\to\infty} f_{s}(\beta, \tilde \delta), 
\end{aligned}
\end{equation}
where 
\begin{equation*}
f_{s}(\beta, \tilde \delta) = e^{-\hat \delta t} \biggl( \prod_{\ell=1}^L
\sum_{k=0}^{s} \frac{{(BA_\ell+\tilde D)}^{k} h_\ell^{k}}{{k}!} \biggr)  p(0). 
\end{equation*}
Notice that all the entries of the matrix power~$(BA_\ell+\tilde D)^{s}$ are posynomials in the variable $(\beta, \tilde \delta) \in \mathbb{R}_{+}^{2n}$. Furthermore, the entries of the vector~$p(0)$ are positive. Therefore, any entry of the vectorial function $f_{s}$ is a posynomial. Hence, equation~\eqref{eq:barp(t)=...} shows that the mapping~\eqref{eq:element} is the pointwise limit of a sequence of posynomials, as desired.
\end{proof}

We are now ready to prove Theorem~\ref{thm:main}. 

\begin{proofof}{Theorem~\ref{thm:main}}
Assume that $b \in \mathbb{R}^n$ and $\tilde d \in \mathbb{R}^n$ solve the feasibility problem~\eqref{eq:convexConstraints}. Define $\beta$ and~$\delta$ by~\eqref{eq:solutionExpressions}. Then, by the definition of the function~$r^+$, we can show that the budget constraint~\eqref{eq:budgetConstraint} is satisfied. The constraints~\eqref{eq:convexConstraints3} and~\eqref{eq:convexConstraints4} immediately imply that the feasibility constraints~\eqref{eq:boxConstraints} are satisfied. Finally, by the definition of the functions $f$ and $F$, the first constraint~\eqref{eq:convexConstraints1} implies that
\begin{equation}\label{eq:pre:performanceConstraint}
J(\bar p) \leq \bar J. 
\end{equation}
On the other hand, by Assumption~\ref{asm:}, there exists a sequence $\{J_k\}_{k=0}^\infty$ of generalized finite-posynomials with nonnegative exponents such that \begin{equation}\label{eq:J=limk...}
J(p) = \lim_{k\to\infty}J_k(p)
\end{equation}
for all $p \in L^\infty([0, T], \mathbb{R}^n_{+})$. Since each $J_k$ has only nonnegative exponents, inequality~\eqref{eq:pleqbarp} shows $J_k(p) \leq J_k(\bar p)$. This inequality together with~\eqref{eq:J=limk...} and~\eqref{eq:pre:performanceConstraint} imply that the performance constraint~\eqref{eq:performanceConst} holds true. Therefore, the transmission and recovery rates given by~\eqref{eq:solutionExpressions} indeed solve Problem~\ref{prb:original}.

Let us show the convexity of the feasibility problem~\eqref{eq:convexConstraints}. It is sufficient to show that the functions $r^+$ and $f$ are convex. To show the convexity of~$r^+$, define the function $\tilde R^+(\beta, \tilde \delta) = \sum_{i=1}^n (\phi_i^+(\beta_i) + \tilde \psi_i^+(\tilde \delta_i))$, which is a posynomial by Assumption~\ref{asm:cost}. Since, for $b, \tilde d \in \mathbb{R}^n$, we have
\begin{equation*}
\begin{aligned}
r^+(b, \tilde d\tdsp )
&=
\log \sum_{i=1}^n (\phi_i^+(\exp(b_i)) + \psi_i^+(\hat \delta - \exp(\tilde d_i)))
\\
&=
\log \tilde R^+(\exp[b], \exp[\tilde d\tdsp ])
, 
\end{aligned}
\end{equation*}
Lemma~\ref{lem:llconvexity} shows that $r^+$ is convex.

Then, let us show the convexity of~$f$. We take a sequence~$\{J_k\}_{k=0}^\infty$ of generalized finite-posynomials with nonnegative exponents such that~\eqref{eq:J=limk...} holds true. Then, in the same way as in~\eqref{eq:def:F} and~\eqref{eq:def:f}, 
for each $k\geq 1$ we define the functions
\begin{align} \allowdisplaybreaks
F_k(\beta, \delta) &= J_k(\bar p(\cdot; \beta, \delta)), 
\label{eq:J(p)=lim...1}
\\
f_k(b, \tilde d\tdsp ) &= \log F_k(\exp[b], \hat \delta - \exp [\tilde d\tdsp ]). 
\label{eq:J(p)=lim...2}
\end{align}
Since $f$ is the pointwise limit of the sequence of functions~$\{f_k\}_{k\geq 0}$, it is sufficient to show the convexity of~$f_k$. Since $J_k$ is a generalized finite-posynomial with nonnegative exponents, there exist a positive integer $m_k$, indices~$i_{k1}, \ldots, i_{km_k} \in \{1, \ldots, n\}$, times~$t_{k1}, \ldots, t_{km_k}\in [0, T]$, and a generalized posynomial $g_k \colon \mathbb{R}^{m_k}_{+} \to \mathbb{R}_{+}$ with nonnegative exponents such that
\begin{equation}\label{eq:Jk(barp)=...} 
J_k(\bar p) = g_k(\bar p_{i_{k1}}(t_{k1}), \ldots, \bar p_{i_{km_k}}(t_{km_k})). 
\end{equation}
By Proposition~\ref{prop:prelim}, for each $j = 1, \dotsc, m_k$, there exists a sequence of posynomials $\{h_{kj}^{(\ell)} \}_{\ell=0}^\infty$ on $\mathbb{R}_{+}^{2n}$ such that
\begin{equation*}
\bar p_{i_{kj}}(t_{kj}; \beta,\hat \delta -  \tilde \delta) = \lim_{\ell\to\infty} h_{kj}^{(\ell)}(\beta, \tilde \delta). 
\end{equation*}
Therefore, by equation~\eqref{eq:Jk(barp)=...} and the continuity of~$g_k$, we obtain
\begin{equation}\label{eq:J_k(p)...}
J_k(\bar p(\cdot; \beta, \hat \delta -  \tilde \delta))  = \lim_{\ell\to\infty} \zeta_k^{(\ell)}(\beta, \tilde \delta), 
\end{equation}
where $\zeta_k^{(\ell)}(\beta, \tilde \delta) =  g_k(h_{k1}^{(\ell)}(\beta, \tilde \delta), \ldots, h_{km_k}^{(\ell)}(\beta, \tilde \delta))$. Notice that $\zeta_k^{(\ell)}$ is a generalized posynomial on $\mathbb{R}^{2n}_{+}$ because $g_k$ has nonnegative exponents and $h_{k1}^{(\ell)}$, $\dotsc$, $h_{km_k}^{(\ell)}$ are posynomials~\cite[Section~5.3]{Boyd2007}. Therefore, the mapping 
\begin{equation*}
\mathbb{R}^{2n} \to \mathbb{R} \colon (b, \tilde d\tdsp ) \mapsto \log \zeta_k^{(\ell)} (\exp[b], \exp[\tilde d\tdsp ])
\end{equation*}
is convex by Lemma~\ref{lem:llconvexity}. Since equations~\eqref{eq:J(p)=lim...1},  \eqref{eq:J(p)=lim...2}, and~\eqref{eq:J_k(p)...} show $$f_k(b, \tilde d\tdsp ) = \lim_{\ell\to\infty} \log \zeta_k^{(\ell)} (\exp[b], \exp[\tilde d\tdsp ]),$$ we obtain the convexity of~$f_k$, as desired.
\end{proofof}

\section{NUMERICAL SIMULATIONS}\label{sec:simulations}

In this section, we illustrate the obtained theoretical results by numerical simulations. We use the empirical temporal network of contacts between the children and teachers in a primary school~\cite{Stehle2011,Gemmetto2014}. In the school, each of the 5 grades is divided into two classes, for a total of 10 classes. Face-to-face interactions between children and teachers were recorded over two days. In this paper, we use the interaction data among the third-grade students on the first day. The resulting temporal network has $n = 44$ nodes and is defined from $t=0$ to $t = \num[group-separator={,}]{31110}$ [sec].

The cost functions for tuning the rates are set to be
\begin{equation*}
\phi_i(\beta_i) = c_{1i}+c_{2i}/{\beta_i^\lambda},
\ 
\psi_i(\delta_i) = c_{3i}+c_{4i}/{{(\hat \delta - \delta_i)}^\lambda}, 
\end{equation*}
where $\hat \delta$ is a constant greater than $\max(\bar \delta_1, \dotsc, \bar \delta_n)$, $\lambda$ is a positive parameter for tuning the shape of the cost functions, and $c_{1i}$, \dots,~$c_{4i}$ are constants to normalize the cost functions as $\phi_i(\ubar \beta_i) = 1$, $\phi_i(\bar \beta_i) = 0$, $\psi_i(\ubar \delta_i) = 0$, and $\psi_i(\bar \delta_i) = 1$. Under this normalization, we have $R(\beta, \delta) = 0$ if $(\beta_i, \delta_i) = (\bar \beta_i, \ubar \delta_i)$ for every node~$v_i$ (i.e., all nodes keep their ``nominal'' infection and transmission rates), while $R(\beta, \delta)=2n$ if $(\beta_i, \delta_i) = (\ubar \beta_i, \bar \delta_i)$ for every $i$ (i.e., all nodes receive the full amount of vaccinations and antidotes).

In this simulation, we let $\ubar \beta_i = 5\times 10^{-4}$, $\bar \beta_i = 5\times 10^{-3}$, $\ubar \delta_i = 10^{-4}$, $\bar \delta_i = 10^{-3}$, $\hat \delta = 10$, and $\lambda=10^{-2}$. We assume that $p(1)  = \cdots = p(11) = 1$ and $p(12) = \cdots = p(44) = 1/100$, i.e., the nodes $v_1$, \dots, $v_{11}$ are infected, while other nodes $v_{12}$, \dots, $v_{44}$ are highly susceptible at time $t=0$. In order to protect the initially susceptible nodes, we use the performance measure $J(p)= \sum_{i=12}^n p_i(T)$. The performance measure obviously satisfies Assumption~\ref{asm:}. Under the budget constraint~$R(\beta, \delta) \leq n,$ we find sub-optimal solutions to Problem~\ref{prb:bc} (the budget-constrained allocation problem) by solving the convex optimization problem~\eqref{eq:opt:bc}. For comparison, we find the sub-optimal transmission and recovery rates that minimize the decay rate of the infection probabilities of the SIS model~\cite{Preciado2014} over a time-aggregated static network using the same cost functions and the budget constraint. As the time-aggregated network, we use the weighted and undirected graph where the weight of an edge is equal to the frequency of the edge appearing in the temporal network.

\begin{figure}[tb]
\vspace{2mm}
\centering
\includegraphics[width=1\linewidth]{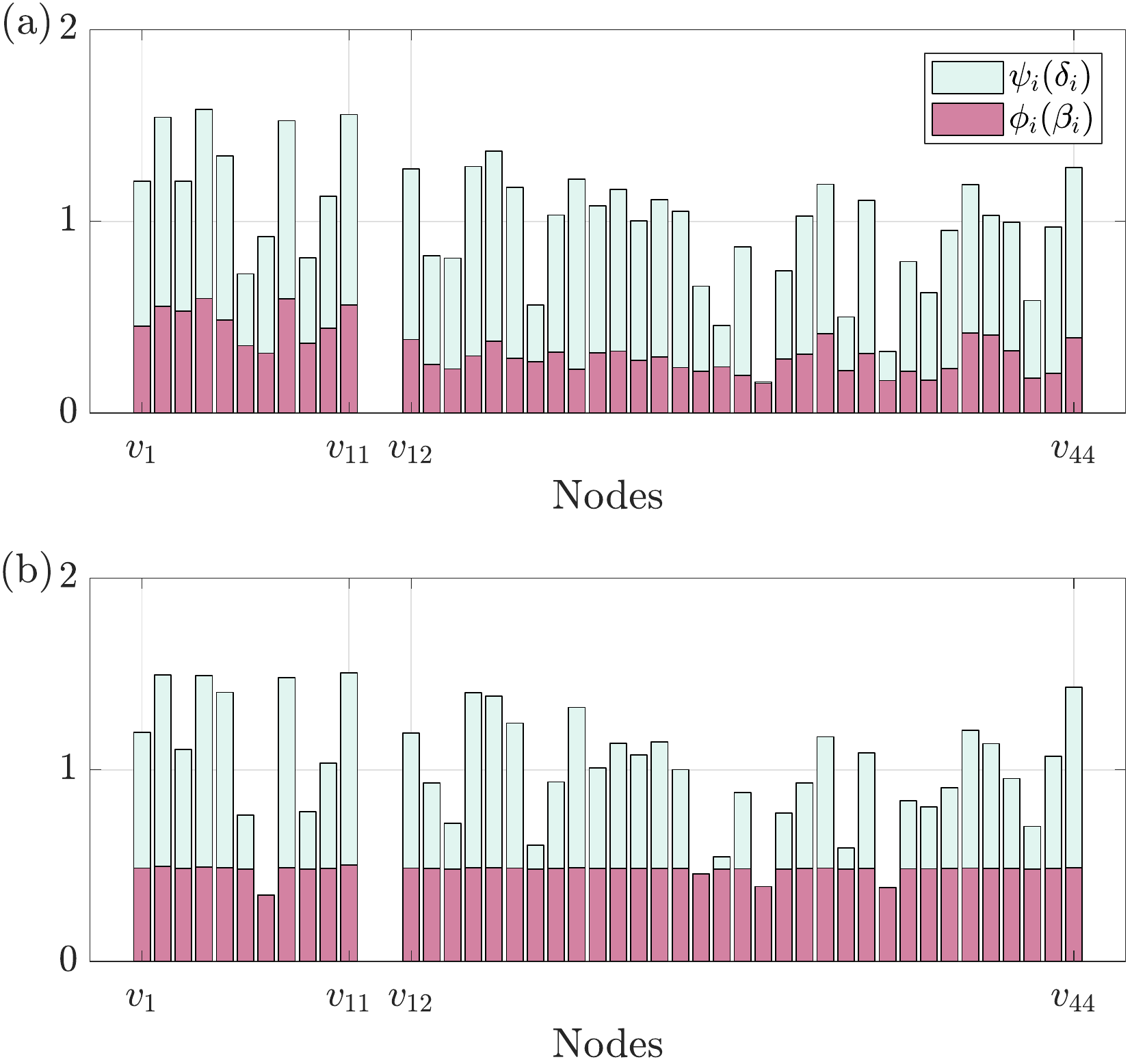}
\caption{Investments on nodes when $v_1$, \dots, $v_{11}$ are initially infected. (a) Sub-optimal solutions to the budget-constrained allocation problem ($J(p) \leq 1.17$). (b) Sub-optimal solutions to the budget-constrained allocation problem \cite[Problem~3]{Preciado2014} for the time-aggregated static network ($J(p) \leq 19.5$).}
\label{fig:optimalDistribution}
\end{figure}

Using Theorem~\ref{thm:main}, we verify the performances of the investments. The proposed investments guarantee $J(p) \leq 1.17$ and drastically improve the one~$J(p) \leq 19.5$ from the method for the time-aggregated static network. In Fig.~\ref{fig:optimalDistribution}, we compare the investments from the proposed and the conventional methods. We see that the proposed method invests in reducing the transmission rates in a heterogeneous manner, while the conventional investments on the transmission rates are almost equal among nodes.

\section{CONCLUSIONS} 

In this paper, we have presented a computationally efficient framework for determining the distribution of control resources for eradicating epidemic outbreaks in empirical temporal networks. We have shown that the resource distribution problem can be reduced to a convex feasibility problem. In the reduction, the posynomials with nonnegative exponents have played an important role. We have illustrated the obtained theoretical results with numerical simulations on the temporal network of contacts within a primary school.

\end{document}